\newcommand{\Rmnum}[1]{\expandafter\@slowromancap\romannumeral #1@}
\newtheorem{lemma}{{Lemma}}
\newtheorem{proof}{Proof}
\begin{document}
\bibliographystyle{IEEEtran}
\title{Beam-Squint Mitigating for Reconfigurable Intelligent Surface Aided Wideband MmWave Communications}%

%\author{Yun~Chen, Da Chen, and Tao Jiang\\
%School of Electronic Information and Communications, Huazhong University of Science and Technology,\\
%Wuhan 430074, China (e-mail: chen\_yun@hust.edu.cn; chenda@hust.edu.cn; tao.jiang@ieee.org)
%}

\author{Yun~Chen,~Da~Chen,~and~Tao~Jiang
\thanks{Y.~Chen,~D.~Chen,~and~T.~Jiang are with School of Electronic Information and Communications, Huazhong University of Science and Technology, Wuhan 430074, China (e-mail: chen\_yun@hust.edu.cn; chenda@hust.edu.cn; tao.jiang@ieee.org).}}

\maketitle

\begin{abstract}%{\color{Bittersweet}\cbcolor{red}\begin{changebar} {...}\end{changebar}}
In this paper, we focus our attention on the mitigation of beam squint for reconfigurable intelligent surface (RIS) aided wideband millimeter wave (mmWave) communications. Due to the intrinsic passive property, the phase shifts of all elements in RIS should be the same for all frequencies. However, in the wideband scenario, beam squint induced distinct path phases require designing different phase shifts for different frequencies. The above irreconcilable contradiction will dramatically affect the system performance, considering the RIS usually consists of enormous elements and the bandwidth of wideband mmWave communications may be up to several GHz. Therefore, we propose some novel phase shift design schemes for mitigating the effect of beam squint for both line-of-sight (LoS) and non-Los (NLoS) scenarios. Specifically, for the LoS scenario, we firstly derive the optimal phase shift for each frequency and obtain the common phase shift by maximizing the upper bound of achievable rate. Then, for the NLoS scenario, a mean channel covariance matrix (MCCM) based scheme is proposed by fully exploiting the correlations between both the paths and the subcarriers. Our extensive numerical experiments confirm the effectiveness of the proposed phase shift design schemes.
\end{abstract}

%\begin{IEEEkeywords}
%Reconfigurable intelligent surface, millimeter wave,  wideband, beam squint.
%\end{IEEEkeywords}

\pagestyle{empty}  % no page number for the second and the later pages
\thispagestyle{empty} % no page number for the first page
\section{Introduction}
%{\color{Bittersweet}\cbcolor{red}\begin{changebar} {...}\end{changebar}}
Millimeter wave (mmWave) communication is one of the most favorable techniques for beyond 5G and the future 6G systems, benefiting from its ample unused spectrum resources \cite{Hemadeh2018,YunChenDCN2020,Heath2016}. However, the high path-loss and blockage-susceptible nature of mmWave signal results in a transmission with extremely short range and high line-of-sight (LoS) dependence. To circumvent the inherent disadvantages of mmWave, typically multiple-input multiple output (MIMO) technologies with large-scale antenna arrays are deployed for providing considerable beam gains, but it inevitably gravely increases the consumption of power and cost \cite{YChenIOT,YChenTCOM1}. Moreover, the issue of blockage induced outage has still not been well addressed.

Recently, as a new type of `array', reconfigurable intelligent surface (RIS) has attracted extensive attentions from academia and industry, where each element of RIS is an almost passive device, such as the phase shifter \cite{Liaskos2018}. RIS is capable of adaptively reflecting the incident signals to the desirable directions with much lower power consumption compared to the traditional antenna array having energy-hungry radio frequency chains \cite{Renzo2019}. Moreover, RIS can be flexibly deployed, such as attached to the facades of buildings, so as to create preferable wireless propagation environments and increase the coverage of mmWave communications.

Most of the previous studies on the design of phase shifts in the RIS focus on the sub-6 GHz or narrow band systems \cite{Huang2018,Wu2019,Yu2019,Han2019}. In \cite{Huang2018}, with the aim of maximizing the spectrum and energy efficiencies, the authors proposed low-complexity algorithms to joint power allocation and phase shift design. In \cite{Wu2019}, a semidefinite relaxation (SDR) based scheme was proposed to solve the phase optimization problem with unit modulus constraints. In \cite{Yu2019}, the fixed point iteration and the manifold optimization based algorithms were proposed to obtain locally optimal solutions. In \cite{Han2019}, the phase shift was designed by phase extraction operation by exploiting only the statistical channel state information (CSI).
Moreover, only a few contributions have considered the wideband scenario. For example, in \cite{Taha2019}, the authors considered a RIS having a few randomly distributed active elements and proposed compressive sensing and deep learning based reflection matrix construction schemes with low training overhead.

However, no work has taken the beam squint into account when designing the phase shifts of RIS. Beam squint represents the spatial direction of a beam changes with the frequency, which results in distinct differences between the path phases on different frequencies. However, the near-passive RIS is applied in the time domain, thus the phase shifts are constrained to be the same for all frequencies, which will impose tremendous performance losses given that the element amount of a RIS is usually enormous and the bandwidths of mmWave communications will be up to several GHz.

Against the above background, we dedicate our efforts to mitigating the effect of beam squint for the RIS aided wideband mmWave communications. Both the LoS scenario and the non-LoS (NLoS) scenario between the RIS and the user are considered. Specifically, for the LoS scenario, the optimal phase shift for each frequency is firstly derived. Then, by maximizing the obtained upper bound of the sum-achievable rate, we propose a near-optimal phase shift design scheme which is only based on the long-term angle information. Furthermore, for the NLoS scenario, by fully exploiting the correlations between both the paths and the subcarriers, we propose a mean channel covariance matrix (MCCM) based scheme for obtaining the common phase shift for all frequencies. By means of extensive numerical experiments, we evaluate the performance of the proposed schemes in terms of various system parameters, including the signal-to-noise ratio (SNR), the bandwidth and the number of reflecting elements. Our simulation results confirm the effectiveness of our proposed phase shift design schemes in mitigating the beam squint.

The rest of the paper is organized as follows. Section II presents our system model and channel model. The problem formulation and the proposed phase shift design schemes for both the LoS and NLoS scenarios are elaborated in Section III. Section IV demonstrates our simulation results. Finally, we conclude this paper in Section V.

\emph{Notation}: $ a$ is a scalar, $\bf{a}$ is a vector, and $\bf{A}$ is a matrix. ${\left\| {\bf{a}} \right\|_1}$ and ${\left\| {\bf{a}} \right\|_2}$ denote the $l_1$ and $l_2$ norm of $\bf{a}$, respectively. ${{\bf{A}}^T},{{\bf{A}}^*}$ and $\left| {\bf{A}} \right|$ denote the transpose, conjugate transpose and determinant of ${\bf{A}}$, respectively. ${{\bf{I}}_N}$ is a $N\!\times\! N$ identity matrix. ${\rm diag}(\bf{a})$ denotes a diagonal matrix that consists of the elements of $\bf{a}$. $\mathbb{C}^{M\times N}$ represents the set of all ${M\times N}$ complex-valued matrices. ${\cal C}{\cal N}({{a}},{{b}})$ is a complex Gaussian variable with mean ${{a}}$ and covariance ${{b}}$. ${[{\bf{A}}]_{i,j}}$ is the $(i,j)^{th}$ element of $\bf{A}$.
\section{System Model and Channel Model}
%{\color{Bittersweet}\cbcolor{red}\begin{changebar} {...}\end{changebar}}

\begin{figure}[t]
\centering
\includegraphics[scale=0.55]{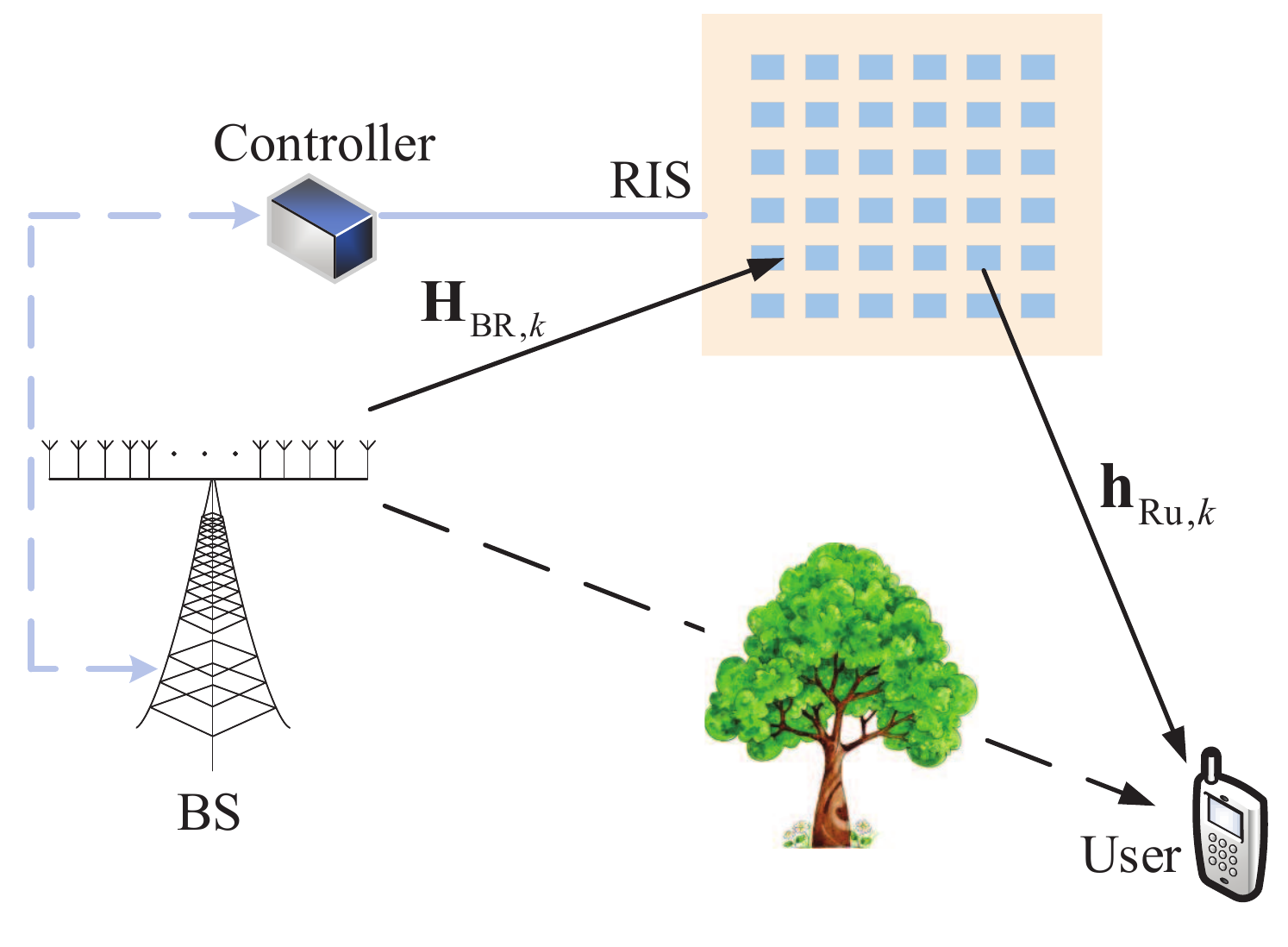}%Fig-mmWaveSU-MIMOOFDMsystem-eps-converted-to.pdf
\caption{A RIS-aided wideband mmWave communication system.}
\label{Fig-systemmodel}
\end{figure}
Consider a RIS-assisted single user wideband mmWave communication system, as shown in Fig. {\ref{Fig-systemmodel}}, where the base station (BS) is equipped with $N$ antennas and transmits signal to a user with the help of $M$-element RIS. To mitigate the dispersion of the wideband mmWave channel, orthogonal frequency-division multiplexing (OFDM) having $K$ subcarriers is utilized to modulate the signal at the BS. The RIS is deployed between the BS and the user, where each element of the RIS is a configurable and programmable phase shifter (PS), whose phase can be dynamically adjusted via a controller between the BS and the RIS. Due to the inherent nature of severe attenuation of mmWave signal, the direct link between the BS and the user is assumed to be negligible and therefore ignored in our system model, similar to \cite{Taha2019}. However, the signal can be reflected by the RIS and reaches the desirable user.  Moreover, the signal reflected by the RIS for more than once is also omitted due to the high path loss.%, as shown in Fig. {\ref{Fig-systemmodel}}

The received signal of the user at the $k^{th}$ subcarrier can be written as
\begin{equation}\label{2.1}
{y}_{k}=\big({\bf h}_{{\rm Ru},k}{\bf \Phi}{\bf H}_{{\rm BR},k}\big){\bf f}_{k}{s}_{k}+{n}_{k},
\end{equation}
where ${s}_{k}$ denotes the transmitted signal at the $k^{th}$ subcarrier, which satisfies $\mathbb{E}[{s}_k]=0$ and $\mathbb{E}[|{s}_k|^2]=1$,  ${\bf H}_{{\rm BR},k}\in \mathbb{C}^{M\times N}$ and ${\bf h}_{{\rm Ru},k}\in \mathbb{C}^{1\times M}$ are the channels at the $k^{th}$ subcarrier from the BS to the RIS and from the RIS to the user, respectively. Moreover, ${\bf \Phi}={\rm diag}(e^{j\phi_{1}}, e^{j\phi_{2}}, ..., e^{j\phi_{M}})$ denotes the phase shift matrix of the RIS, where $\phi_{m}$ denotes the phase shift of introduced by the $m^{th}$ element. Furthermore, ${\bf f}_{k}\in \mathbb{C}^{M\times 1}$ is the beamforming vector for the $k^{th}$ subcarrier, where we assume perfect channel state information (CSI) is known at the BS and adopt the maximum ratio transmitting (MRT) scheme, similar to \cite{Han2019,Yu2019}, so that we have
\begin{equation}\label{2.2}
{\bf f}_{k}=\sqrt{P}\frac{\big({\bf h}_{{\rm Ru},k}{\bf \Phi}{\bf H}_{{\rm BR},k}\big)^*}{\big\|{\bf h}_{{\rm Ru},k}{\bf \Phi}{\bf H}_{{\rm BR},k}\big\|},
\end{equation}
where $P$ is the total transmit power. Finally, ${n}_{k}\sim {\cal {CN}}(0,\sigma _n^{2})$ denotes the additive white Gaussian noise at the $k^{th}$ subcarrier.

Typically, the mmWave channel possesses quite limited scattering paths due to the severe path loss, the Saleh-Valenzuela model is adopted for modelling the wideband mmWave channel \cite{YChenTVT,YChenTCOM2,Ayach2013}. For the channel matrix between the BS and the RIS, we assume that there is only a LoS path, since the locations of the BS and the RIS are both fixed and it is reasonable to place the RIS at a position where the LoS transmission is feasible. Moreover, for mmWave channel, when the LoS path exists, the NLoS paths is negligible. Therefore, ${\bf H}_{{\rm BR},k}$ can be formulated as
\begin{equation}\label{2.}
{\bf H}_{{\rm BR},k}=\gamma_{\rm BR}{{\alpha _{{\rm BR}}}}{\beta_{{\rm BR},k}}{{\bf{a}}_{\rm{M}}}(\phi _{{\rm BR},k}){{\bf{a}}_{\rm{N}}^*}(\varphi_{{\rm BR},k}),
\end{equation}
where $\gamma_{\rm BR}=\sqrt{MN}$ is the normalization factor, ${{\alpha _{{\rm BR}}}}$ is the path gain, and ${\beta_{{\rm BR},k}}= e^{-j2\pi\tau_{}f_{k}}$ denotes the delay component, in which $\tau_{} \sim {\cal {U}}(0,20{\rm ns})$ is the delay of the path. Moreover, ${{\bf{a}}_{\rm{M}}}(\phi _{{\rm BR},k})$ and ${{\bf{a}}_{\rm{N}}}(\varphi_{{\rm BR},k})$ are the array response vectors for the uniform linear array (ULA). Please note that the proposed schemes are applicable to arbitrary antenna arrays, such as uniform planar arrays (UPA). Therefore, the $N$-element array response vector can be written as
\begin{equation}\label{2.3}
{{\bf{a}}_{\rm{N}}}(\phi_k)=\dfrac{1}{{\sqrt {{N}} }}\Big[1, e^{j2\pi\phi_k},..., e^{j({N-1})2\pi\phi_k}\Big]^T,
\end{equation}
where $\phi _{k}$ is the spatial angle of the path at the $k^{th}$ subcarrier, which can be written as
\begin{equation}\label{2.5}
\phi _{k}=\frac{f_k}{\rm c}d{\rm sin}(\theta),
\end{equation}
where $f_{k}=f_{\rm c}+\frac{f_{\rm s}}{K}(k-1-\frac{K-1}{2})$ is the frequency at the $k^{th}$ subcarrier, $f_{\rm c}$ and $f_{\rm s}$ are the carrier frequency and bandwidth, respectively, while $\rm c$ represents the speed of light and $d=\frac{\rm c}{2f_{\rm c}}=\frac{\lambda}{2}$ is the antenna-element spacing, in which $\lambda$ is the signal wavelength of the central frequency. Moreover, $\theta\in [0,2\pi)$ denotes the corresponding physical angle. In the narrow-band scenario or wideband scenario with relatively low bandwidth, $f_{k}$ is usually assumed to be equal to $f_{\rm c}$, resulting in that the spatial angles are frequency-independent, i,e., $\phi_{k}=\frac{f_{\rm c}}{\rm c}\frac{\rm c}{2f_{\rm c}}{\rm sin}(\theta_{})=\frac{1}{2}{\rm sin}(\theta_{})$. However, for fulfilling the requirement of future extremely high speed transmission, utilizing  substantial bandwidth resources of the mmWave band or of the higher frequency band is essential. Therefore, in this case, the assumption $f_{k}=f_{\rm c}$ no longer holds and the more practical spatial angle $\phi _{k}$ will be frequency-dependent due to the strong beam squint, which will further result in the difference of subspaces of distinct subcarrier channel matrices. However, all the elements of RIS are  near-passive and are applied in the time domain, thus the phase shifts are the same for all subcarriers, which gravely constrains the performance of the RIS aided wideband mmWave communications.

For the channel vector between the RIS and the user, we consider both the LoS and the NLoS scenarios, since the user is usually in mobility, therefore, ${\bf h}_{{\rm Ru},k}$ can be formulated as
\begin{numcases}{\!\!\!\!\!\!\!\! {\bf h}_{{\rm Ru},k} =}
\gamma_{\rm Ru}{\alpha _{{\rm Ru}}}{\beta_{{\rm Ru},k}}{{\bf{a}}_{\rm{M}}^*}(\varphi_{{\rm Ru},k}), &\!\!\!\!\!\!\! {\rm LoS}, \\
\tilde{\gamma}_{\rm Ru}\sum\limits_{l=1}^{L}{{\alpha _{{\rm Ru},l}}}{\beta_{{\rm Ru},l,k}}{{\bf{a}}_{\rm{M}}^*}(\varphi_{{\rm Ru},l,k}), &\!\!\!\!\!\!\! {\rm NLoS}.
\end{numcases}
where $\gamma_{\rm Ru}=\sqrt{M}$ and $\tilde{\gamma}_{\rm Ru}=\sqrt{M/L}$ are the corresponding normalization factors, ${{\alpha _{{\rm Ru}}}}$ and ${{\alpha _{{\rm Ru},l}}}$ denote the path gains, while ${\beta_{{\rm Ru},k}}$ and ${\beta_{{\rm Ru},l,k}}$ denote the delay components on the $k^{th}$ subcarrier for the LoS scenario and the NLoS scenario, respectively. Similarly, the channel vector ${\bf h}_{{\rm Ru},k}$ is also dramatically affected by the beam squint.

%To further understand the impact of beam squint on different subcarriers' channel, we provide the simulation results of the mean chordal distance between the subspaces of channel matrices on different subcarriers.
%different dimensions of effective subspaces.

\section{Phase Shift Design for Wideband MmWave Communications}

\subsection{Problem Formulation}
The objective of designing the phase shifts in the RIS is to maximize the achievable sum rate $R_{\rm {sum}}$ at the user over the wideband mmWave channel. According to (\ref{2.1}) and (\ref{2.2}), $R_{\rm {sum}}$ is given by
\begin{equation}\label{3.1}
\begin{array}{l}
R_{\rm {sum}}=\dfrac{1}{K}\sum\limits_{k=1}^{K}{\log _2}\Big( 1+\frac{P}{\sigma _n^{2}}\big\|{\bf h}_{{\rm Ru},k}{\bf \Phi}{\bf H}_{{\rm BR},k} \big\|^2 \Big ).
\end{array}
\end{equation}
Therefore, our optimization problem can be formulated as
\begin{equation}\label{3.}
\begin{array}{l}
{\bf \Phi}^{\rm opt} = \mathop {{\rm{arg}}{\kern 1pt} {\kern 1pt} {\rm{max}}} R_{\rm {sum}}, \forall k,\\
 {\kern 35pt} {\rm{s}}{\rm{.t}}{\rm{.}}{\kern 7pt}  \big|[{\bf \Phi}]_{m,m}\big|=1, \forall m.
\end{array}
\end{equation}
Under the constant modulus constraint on the elements of ${\bf \Phi}$, maximizing the above achievable rate is non-convex and thus intractable, which requires highly sophisticated iteration search process, such as the fixed point iteration and the manifold optimization. Moreover, as has been mentioned in Section II, ${\bf \Phi}$ is the same for all subcarriers, which makes maximizing the achievable rate even more challenging.
Instead, in this paper, we mainly focus on low-complexity but effective solutions, which are only based on the long-term CSI, and try to provide some designing insights for RIS based wideband mmWave communications in the face of beam squint.
Specifically, in the following two subsections, for both the LoS scenario and the NLoS scenario between the RIS and the user, the near-optimal RIS phase shift design schemes are proposed based on the long-term angle information and the MCCM, respectively.

\subsection{LoS Scenario}
In this section, we assume that there is only one LoS path between the RIS and the user, which is actually the most common and practical scenario for RIS assisted mmWave communications. In this case, the channel vector between the RIS and the user is shown as (6). Though the LoS scenarios is simpler than other scenarios having multiple paths, obtaining the optimal phase shift matrix for the wideband mmWave communication systems is still challenging. Let us kick off from the following lemma.

\begin{lemma}
For achievable rate on the $k^{th}$ subcarrier channel $R_k$, the optimal phase shift of the $m^{th}$ element in the RIS can be designed as
\begin{equation}\label{3.}
\phi_{m}^{\rm opt}=2\pi(m-1)(-\phi _{{\rm BR},k}+\varphi_{{\rm Ru},k}),
\end{equation}
and $\{{\rm{max}}{\kern 1pt} {\kern 1pt} R_k\}$ is equivalent to $\{{\rm{max}}{\kern 1pt} {\kern 1pt} \big | z_k \big |^2\}$, and is further equivalent to
\begin{equation}\label{3.10}
\mathop { {\rm{min}}} \big| 2\pi(m-1)(-\phi _{{\rm BR},k}+\varphi_{{\rm Ru},k})-\phi_{m} \big |, \forall m.
\end{equation}
\end{lemma}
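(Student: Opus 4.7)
The plan is to reduce the rate maximization to a pointwise phase-alignment problem by peeling off all factors that do not depend on $\{\phi_m\}$ and then invoking the triangle inequality for sums of unit-modulus complex numbers.

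First, I would observe that since $\log_2(1+\cdot)$ is strictly increasing and $P/\sigma_n^2$ is a positive constant, maximizing $R_k$ in (7) is equivalent to maximizing $\|{\bf h}_{{\rm Ru},k}{\bf \Phi}{\bf H}_{{\rm BR},k}\|^2$. I would then substitute the rank-one LoS expressions from (3) and (6) to write
\begin{equation*}
{\bf h}_{{\rm Ru},k}{\bf \Phi}{\bf H}_{{\rm BR},k} = \gamma_{\rm Ru}\gamma_{\rm BR}\alpha_{\rm Ru}\alpha_{\rm BR}\beta_{{\rm Ru},k}\beta_{{\rm BR},k}\,z_k\,{\bf a}_N^*(\varphi_{{\rm BR},k}),
\end{equation*}
where $z_k \triangleq {\bf a}_M^*(\varphi_{{\rm Ru},k}){\bf \Phi}{\bf a}_M(\phi_{{\rm BR},k})$. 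Because $\|{\bf a}_N(\varphi_{{\rm BR},k})\|=1$, $|\beta_{{\rm Ru},k}|=|\beta_{{\rm BR},k}|=1$, and the path gains and normalization factors are independent of ${\bf \Phi}$, the problem collapses to maximizing $|z_k|^2$, establishing the first equivalence.

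Second, I would expand the scalar $z_k$ using the definition (4) of the array responses and the diagonal structure ${\bf \Phi}={\rm diag}(e^{j\phi_1},\ldots,e^{j\phi_M})$ to obtain
\begin{equation*}
z_k = \frac{1}{M}\sum_{m=1}^{M} \exp\!\Big\{j\big[\phi_m + 2\pi(m-1)(\phi_{{\rm BR},k}-\varphi_{{\rm Ru},k})\big]\Big\}.
\end{equation*}
By the triangle inequality, $|z_k|\le \tfrac{1}{M}\sum_{m=1}^{M}1=1$, and equality holds if and only if all $M$ unit-modulus terms share a common argument (modulo $2\pi$). The simplest way to achieve this is to force each exponent to zero, which yields exactly $\phi_m^{\rm opt}=2\pi(m-1)(-\phi_{{\rm BR},k}+\varphi_{{\rm Ru},k})$, giving the displayed formula (9).

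Finally, for the equivalence with the minimization in (10), I would note that because the contribution of $\phi_m$ to $z_k$ is separable across $m$, closing the gap to the upper bound amounts to making each residual phase error $2\pi(m-1)(-\phi_{{\rm BR},k}+\varphi_{{\rm Ru},k})-\phi_m$ as close to a multiple of $2\pi$ as possible; under the unit-modulus constraint $|[{\bf \Phi}]_{m,m}|=1$ with free argument, this is simply minimization of the absolute phase error element-wise, which is exactly (10). The only subtlety is to remark that the optimum in (10) is achieved by any representative of the target phase modulo $2\pi$, so the solution set coincides with the set of maximizers of $|z_k|^2$; I expect this modular-equivalence observation, rather than any analytic difficulty, to be the main (minor) obstacle worth spelling out carefully.
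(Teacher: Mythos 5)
Your proposal is correct and follows essentially the same route as the paper's proof: reduce $R_k$ to maximizing $|z_k|^2$ by stripping off the $\Phi$-independent gains and the rank-one factor ${\bf a}_N^*(\varphi_{{\rm BR},k})$, then align the unit-modulus summands of $z_k$ (the paper phrases this as ``making all elements in phase,'' you invoke the triangle inequality explicitly), which yields $\phi_m^{\rm opt}$ and the element-wise phase-error minimization (10). Your explicit remark about the common-phase/modulo-$2\pi$ freedom corresponds to the paper's constant $\rm C$ being set to zero without loss of generality, so the two arguments match.
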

\begin{proof}
According to (\ref{3.1}), $R_k$ can be written as
\begin{equation}\label{3.}
\begin{split}
R_k&={\log _2}\Big( 1+\frac{P}{\sigma _n^{2}}\big\|{\bf h}_{{\rm Ru},k}{\bf \Phi}{\bf H}_{{\rm BR},k} \big\|^2 \Big )\\
&={\log _2}\Big( 1+\frac{P}{\sigma _n^{2}}\big | z_k \big |^2 \|{{\bf{a}}_{\rm{N}}^*}(\varphi_{{\rm BR},k})\|^2\Big )\\
&={\log _2}\Big( 1+\frac{PN}{\sigma _n^{2}}\big | z_k \big |^2 \Big ),
\end{split}
\end{equation}
where $z_k=\sum\limits_{m=1}^{M}z_{k,m}=\sum\limits_{m=1}^{M}e^{j2\pi(m-1)(\phi _{{\rm BR},k}-\varphi_{{\rm Ru},k})+j\phi_{m}}$. The optimal phase shift for the $k^{th}$ frequency can be readily obtained by making all elements of $z_{k,m}$ in phase, i.e., $\phi_{m}=2\pi(m-1)(-\phi _{{\rm BR},k}+\varphi_{{\rm Ru},k})+\rm C$, which is the phase difference of the array vectors before and after the RIS reflection. Moreover, without loss of generality, $\rm C$ can be set to 0, which means maximizing $R_k$ is equivalent to making phases of all elements in $z_k$ equal to 0.
\end{proof}

Due to the inherent passive nature of RIS, the phase shift matrix for all subcarriers' channels will be the same. However, the angles of paths are spread from the central frequency when taking the beam squint into account according to (\ref{2.5}). Therefore, it is impossible to guarantee all the elements of  $z_k$ in phase. Nevertheless, a near-optimal solution can be obtained by maximizing the upper bound of the achievable rate $R^{\rm {ub}}$, where $R^{\rm {ub}}$ is given by
%\begin{equation}\label{2.}
%\begin{split}
%R_{\rm {sum}}=\frac{1}{K}\sum\limits_{k=1}^{K}{\log _2}\Big( 1+\frac{PN}{\sigma _n^{2}}\big | z_k \big |^2 \Big )\leq{\log _2}\Big( 1+\frac{PN}{\sigma _n^{2}K}\sum\limits_{k=1}^{K}\big | z_k \big |^2 \Big )\overset{\bigtriangleup}=R^{\rm {ub}}.
%\end{split}
%\end{equation}
\begin{equation}\label{2.}
\begin{split}
R_{\rm {sum}}&=\frac{1}{K}\sum\limits_{k=1}^{K}{\log _2}\Big( 1+\frac{PN}{\sigma _n^{2}}\big | z_k \big |^2 \Big )\\
&\leq{\log _2}\Big( 1+\frac{PN}{\sigma _n^{2}K}\sum\limits_{k=1}^{K}\big | z_k \big |^2 \Big )\\
&\overset{\vartriangle}=R^{\rm {ub}},
\end{split}
\end{equation}
and the problem ${\bf \Phi}^{\rm ub} = \mathop {{\rm{arg}}{\kern 1pt} {\kern 1pt} {\rm{max}}} R^{\rm {ub}}$ is equivalent to
\begin{equation}\label{2.13}
{\bf \Phi}^{\rm ub} = \mathop {{\rm{arg}}{\kern 1pt} {\kern 1pt} {\rm{max}}} \dfrac{1}{K}\sum\limits_{k=1}^{K}\big | z_k \big |^2.
\end{equation}
Direct solution of (\ref{2.13}) is still intractable. However, motivated by (\ref{3.10}), we propose to solve an alternative suboptimal problem, which is formulated as
\begin{equation}\label{2.14}
 {\rm{min}} {\kern 1pt} {\kern 1pt} \dfrac{1}{K}\sum\limits_{k=1}^{K}\big| 2\pi(m-1)(-\phi _{{\rm BR},k}+\varphi_{{\rm Ru},k})-\phi_{m} \big |, \forall m,
\end{equation}
which means the designed phase shift of each element in the RIS is to minimize the sum of phase differences for all subcarriers so as to reflect the incident signal carried by each subcarrier to the desirable direction as close as possible and the solution of (\ref{2.14}) can be readily obtained by
\begin{equation}\label{3.15}
\begin{split}
\phi_{m}^{\rm }&=\dfrac{1}{K}\sum\limits_{k=1}^{K} 2\pi(m-1)(-\phi _{{\rm BR},k}+\varphi_{{\rm Ru},k})\\
&=\dfrac{1}{K}\sum\limits_{k=1}^{K} 2\pi(m-1)\frac{f_k}{\rm c}d({\rm sin}(\vartheta_{\rm Ru})-{\rm sin}(\theta_{\rm BR}))\\
&=2\pi(m-1)\frac{\dfrac{1}{K}\sum\limits_{k=1}^{K}f_k}{\rm c}d({\rm sin}(\vartheta_{\rm Ru})-{\rm sin}(\theta_{\rm BR}))\\
&=2\pi(m-1)\frac{f_{\rm c}}{\rm c}d({\rm sin}(\vartheta_{\rm Ru})-{\rm sin}(\theta_{\rm BR}))\\
&=\pi(m-1)({\rm sin}(\vartheta_{\rm Ru})-{\rm sin}(\theta_{\rm BR})),
\end{split}
\end{equation}
which is actually the phase shift difference on the central frequency. (\ref{3.15}) indicates that even in the face of beam squint, reflecting the incident signal along the path angle  of the central subcarrier is capable of achieving quite good performance, which is also confirmed by the simulation results in Section IV.A. Moreover, constructing the phase shift matrix according to (\ref{3.15}) only requires the long-term angle information ($\vartheta_{\rm Ru}$ and $\theta_{\rm BR}$) and extreme low computational complexity.

\subsection{NLoS Scenario}
Considering the user is usually in mobility, the LoS transmission can not be guaranteed all the time. Therefore, in this subsection, we also consider the more complicated scenario where the LoS path between the RIS and the user is blocked, but the signal can be transmitted through multiple NLoS paths. In this case, the channel vector between the RIS and the user is given by (7).

Recalling the achievable rate from (\ref{3.1}), which obeys the following inequalities
\begin{equation}\label{4.16}
\begin{split}
R_{\rm {sum}}&=\dfrac{1}{K}\sum\limits_{k=1}^{K}{\log _2}\Big( 1+\frac{P}{\sigma _n^{2}}\big({\bf h}_{{\rm Ru},k}{\bf \Phi}{\bf H}_{{\rm BR},k}{\bf H}_{{\rm BR},k}^*{\bf \Phi}^*{\bf h}_{{\rm Ru},k}^* \big) \Big )\\
&\overset{(a)}{\leq}\dfrac{1}{K}\sum\limits_{k=1}^{K}{\log _2}\Big( 1+\frac{PN}{\sigma _n^{2}}\big({\bf h}_{{\rm Ru},k}{\bf \Phi}_{2}{\bf B}_{{\rm M}}{\bf \Phi}_{2}^*{\bf h}_{{\rm Ru},k}^* \big) \Big )\\
&\overset{(b)}=\dfrac{1}{K}\sum\limits_{k=1}^{K}{\log _2}\Big(\Big | {\bf I}_{M}+\frac{PN}{\sigma _n^{2}}\big({\bf B}_{{\rm M}}{\bf \Phi}_{2}^*{\bf h}_{{\rm Ru},k}^*{\bf h}_{{\rm Ru},k}{\bf \Phi}_{2}\big) \Big |\Big)\\
&\overset{(c)}{\leq}{\log _2}\Big(\Big | {\bf I}_{M}+\frac{PN}{\sigma _n^{2}}\big({\bf B}_{{\rm M}}{\bf \Phi}_{2}^*\frac{1}{K}\sum\limits_{k=1}^{K}{\bf h}_{{\rm Ru},k}^*{\bf h}_{{\rm Ru},k}{\bf \Phi}_{2}\big) \Big |\Big),
\end{split}
\end{equation}
where ${\bf \Phi}$ is divided into two phase shift matrices as ${\bf \Phi}={\bf \Phi}_{2}{\bf \Phi}_{1}$, in which ${\bf \Phi}_{1}$ is to receive the incident signal, while ${\bf \Phi}_{2}$ is to forward the signal to the user. For the LoS channel between the BS and the RIS, if ${\bf \Phi}_{1}$ can perfectly receive the incident signal from each subcarrier channel, we have ${\bf \Phi}_{1}{{\bf{a}}_{\rm{M}}}(\phi _{{\rm BR},k})={1}/{\sqrt{M}}[\underbrace{1,1,...,1}_M]^T, k=1,2,...,K$ and thus ${\bf \Phi}_{1}{\bf H}_{{\rm BR},k}{\bf H}_{{\rm BR},k}^*{\bf \Phi}_1^*=N{\bf B}_{{\rm M}}$, where ${\bf B}_{{\rm M}}$ is a ${\rm M\times M}$ matrix with all elements are 1. However, since ${\bf \Phi}_{1}$ is constrained by the passive characteristic of RIS and should be the same for all frequencies, we obtain the inequality (a). (b) is due to the fact that $|\bf I + AB | = |I + BA|$ by defining ${\bf A} = {\bf h}_{{\rm Ru},k}{\bf \Phi}_{2}$ and ${\bf B} = {\bf B}_{{\rm M}}{\bf \Phi}_{2}^*{\bf h}_{{\rm Ru},k}^*$, Finally, (c) is derived by employing Jensen's inequality.

Note that the inner item in the last line of (\ref{4.16}) $\frac{1}{K}\sum\limits_{k=1}^{K}{\bf h}_{{\rm Ru},k}^*{\bf h}_{{\rm Ru},k}$ is actually the MCCM between the RIS and the user. Moreover, according to (\ref{4.16}), the optimal unconstrained ${\bf \Phi}_{2}$ can be constructed by using the first column of the singular matrix of $\frac{1}{K}\sum\limits_{k=1}^{K}{\bf h}_{{\rm Ru},k}^*{\bf h}_{{\rm Ru},k}$. To meet the constant modulus constraint, ${\bf \Phi}_{2}$ is finally given by applying further phase extraction operation. Similar as the method proposed in Section III.B, ${\bf \Phi}_{1}$ is determined by the path phase of the central subcarrier's channel between the BS and the RIS.

The proposed scheme is based on the singular value decomposition (SVD) on the MCCM, which makes full use of the correlations between both the paths and the subcarriers so as to improve the performance. Moreover, the main computational complexity of the proposed scheme also comes from the SVD operation, which is much lower than that of the sophisticated iteration based schemes, such as the manifold optimization based scheme in \cite{Yu2019}.

\section{Simulation Results}
%\begin{spacing}{2.0}...\end{spacing}
%appraise, evaluate
In this section, we provide extensive numerical results for characterizing the performance of the proposed schemes in RIS aided wideband mmWave communications.
Unless otherwise indicated, the simulation parameters are given as follows. The central carrier frequency is $f_{\rm c}=$ 28 GHz, the number of carriers is $K=128$, and the bandwidth considered is $f_{\rm s}=$ 2 GHz. Moreover, the number of antennas at the BS is $N=64$, while the number of elements in the RIS is $M=64$. Finally, the path angles for both LoS and NLoS scenarios are randomly distributed within $(0, 2\pi]$.

\subsection{LoS scenario}
\begin{figure}[t]
\centering
\includegraphics[scale=0.6]{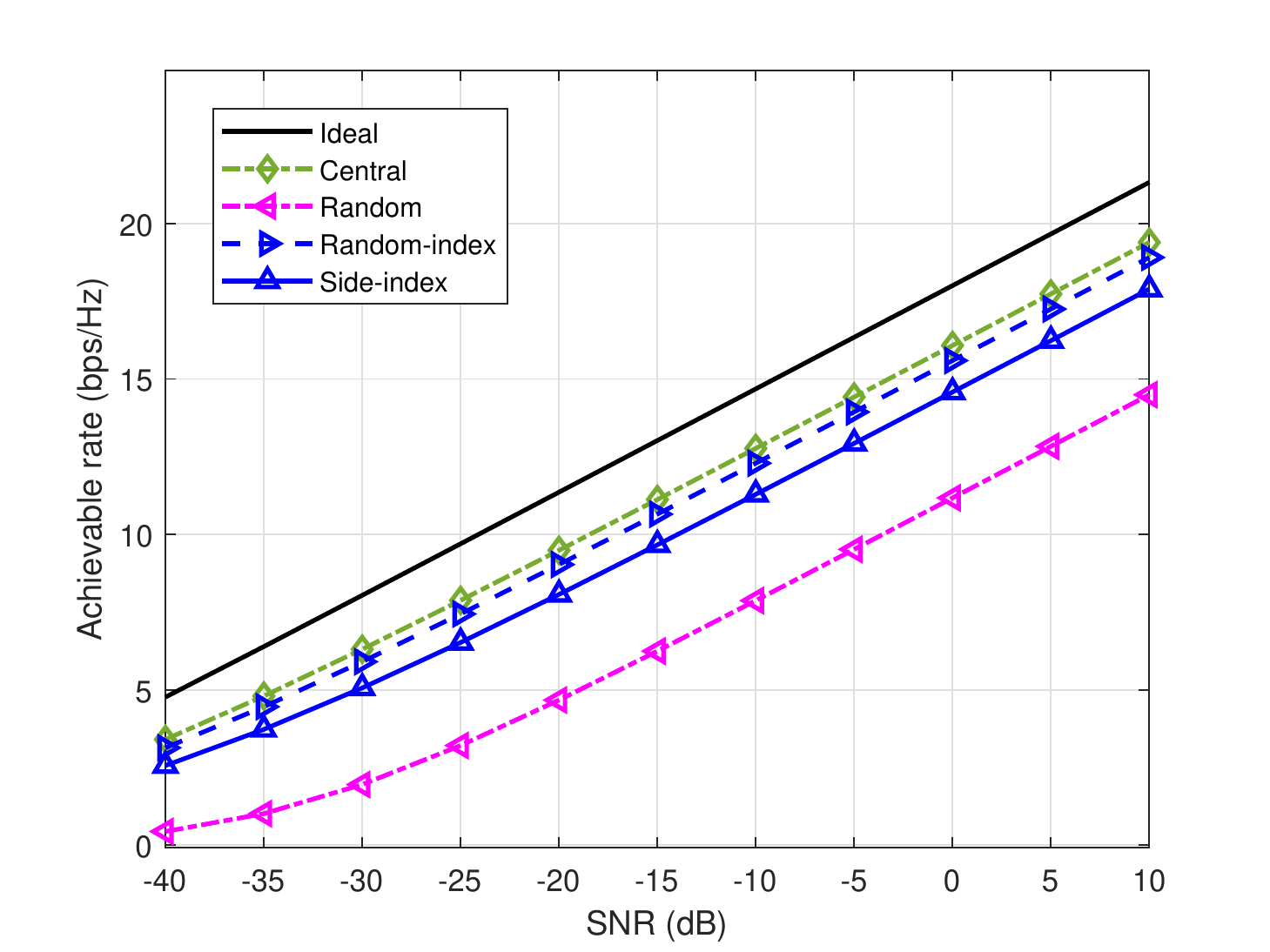}
\caption{Achievable rate vs SNR in the LoS scenario.}
\label{Fig-SNR}
\end{figure}
We firstly show the simulation results for the LoS scenario in this subsection, where the proposed phase shift design scheme shown in (\ref{3.15}) is marked as `Central' in Figs. 2-4. Moreover, we also adopt several benchmarks, in which `Ideal' is the upper bound where the phase shift matrix reflects the signal according to the different phase differences on different subcarriers, `Random' is the scheme in which the phase of each element in the RIS is set randomly, `Random-index' represents the scheme where the phase shifts are determined based on the phase differences of a subcarrier whose index is randomly selected, and finally `Side-index' is the scheme such that the phase shifts are designed according to the phase differences of the subcarrier on either side.

Fig. {\ref{Fig-SNR}} shows the achievable rate for different SNRs. It can be observed that there are non-negligible performance gaps between the `Ideal' scheme and other schemes, which reveals the inherent disadvantage of the passive RIS in the wideband scenario when taking the beam squint into account. However, the `Central' scheme always outperforms the `Random', `Random-index' and `Side-index' schemes, which confirms the effectiveness of our proposed scheme in mitigating the effect of beam squint.

\begin{figure}[t]
\centering
\includegraphics[scale=0.6]{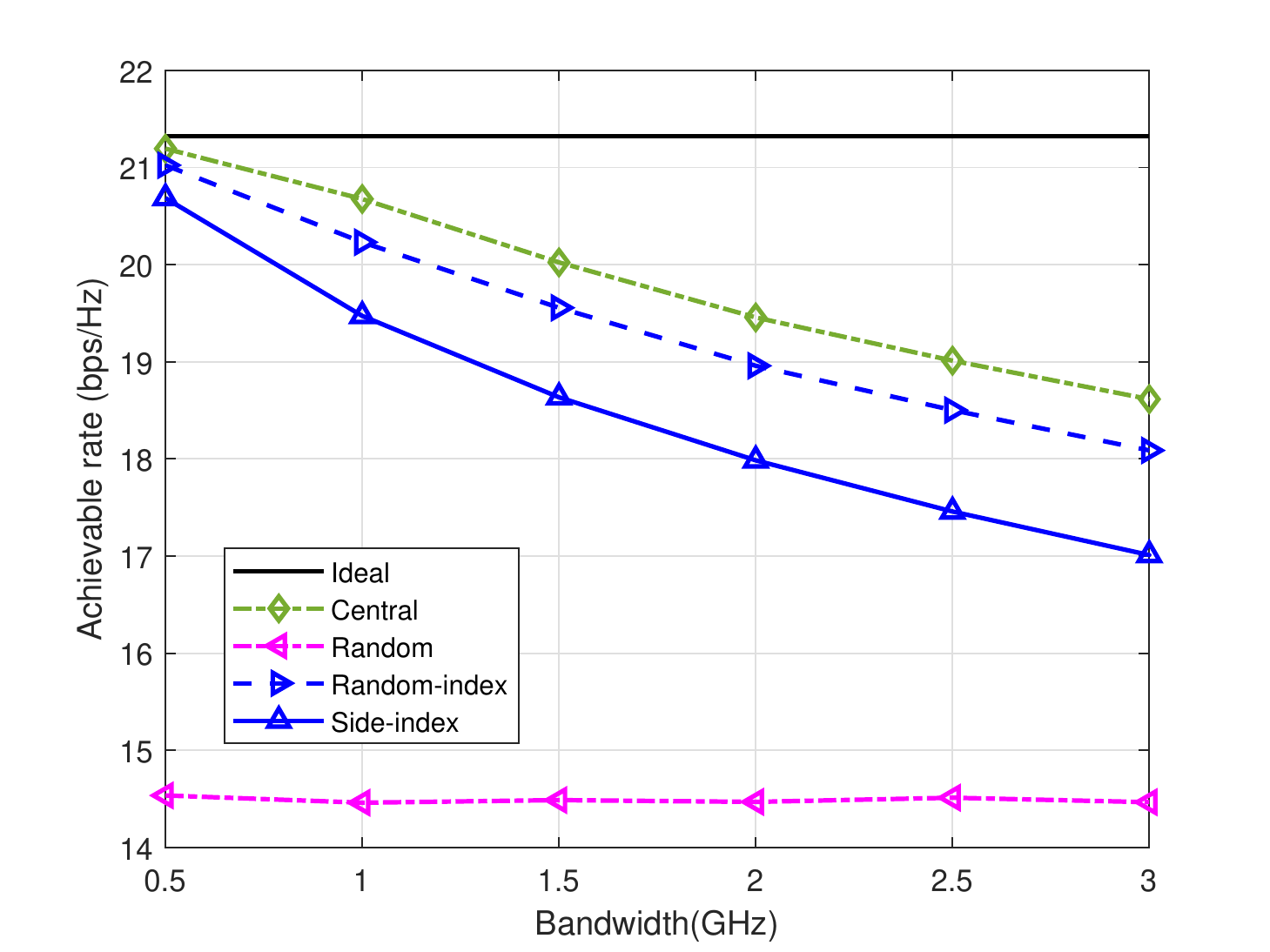}
\caption{Achievable rate vs Bandwidth in the LoS scenario.}
\label{Fig-Bandwidth}
\end{figure}
Fig. {\ref{Fig-Bandwidth}} presents the performance of different schemes for different bandwidths, where the SNR is 10 dB. We observe that the proposed scheme is capable of achieving near-optimal performance when the bandwidth is 500 MHz and attains better performance than `Random-index' and `Side-index' schemes for both low bandwidth and extremely high bandwidths. Moreover, the performance gap between the `Ideal' scheme and the other schemes except the `Random' scheme increases with the bandwidth, which indicates that the beam squint will gravely affect the performance of RIS assisted wideband communications and it is of great significance to mitigate the effect of beam squint.

\begin{figure}[t]
\centering
\includegraphics[scale=0.6]{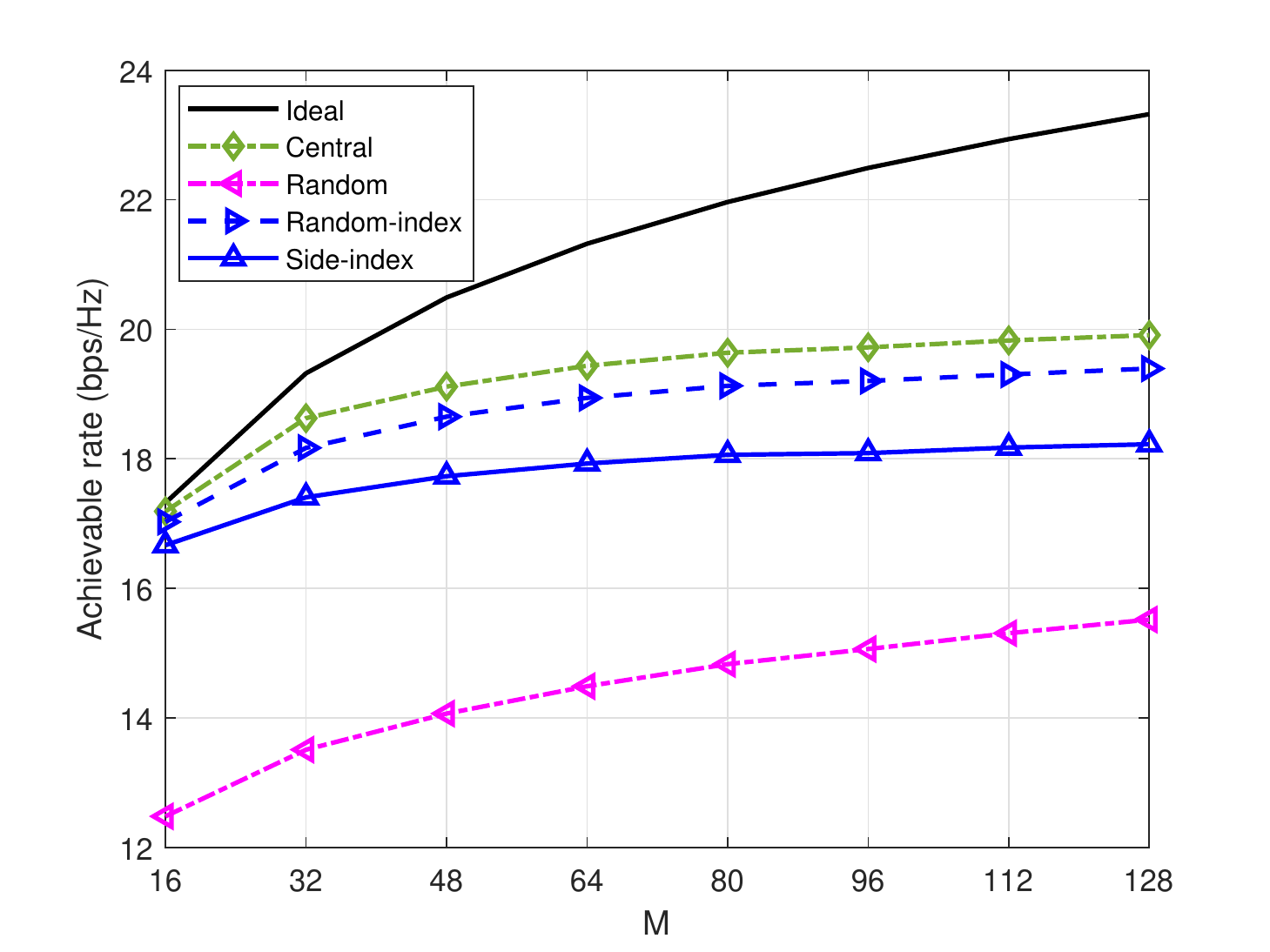}
\caption{Achievable rate vs the number of RIS elements in the LoS scenario.}
\label{Fig-M}
\end{figure}
Fig. \ref{Fig-M} illuminates the achievable rate of different schemes for different numbers of reflecting elements. It may be observed that the performance gaps between the upper bound and the other schemes increases with the number of reflecting elements, which demonstrates that the beam squint is more severe when deploying more reflecting elements in the RIS.

\subsection{NLoS scenario}
In this subsection, we evaluate the proposed MCCM based scheme for the NLoS scenario between the RIS and the user. Benchmarks similar to those in the LoS scenario are utilized to validate the superiority of the proposed scheme, where the difference is that `Central', `Random-index' and `Side-index' refer to the scheme based on the central subcarrier channel covariance matrix, the scheme based on the channel covariance matrix of the subcarrier whose index is randomly selected, and the schemes based on the channel covariance matrix of the subcarrier on either side, respectively. In this NLoS case, the number of paths is set to be 5.

\begin{figure}[t]
\centering
\includegraphics[scale=0.6]{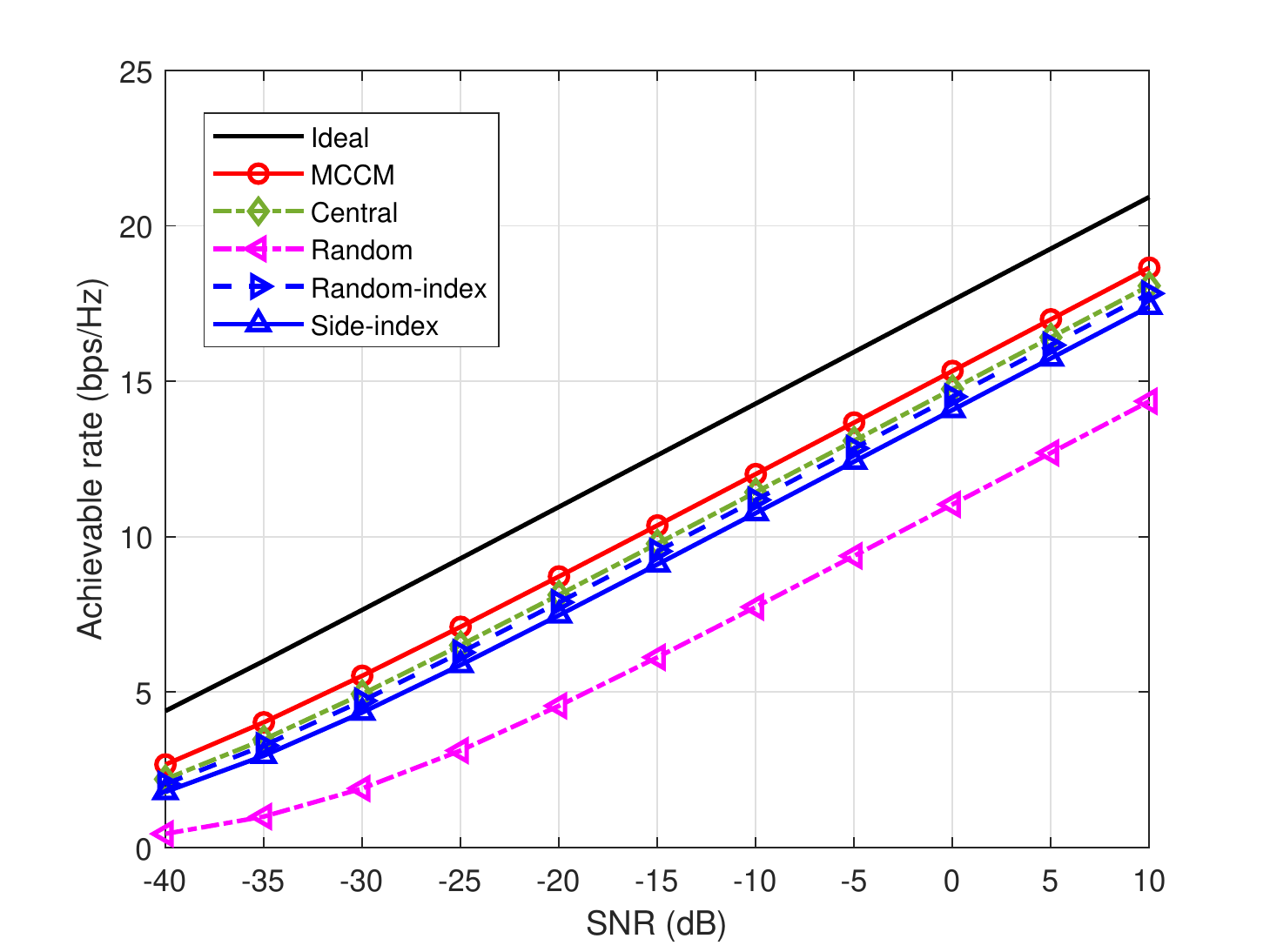}
\caption{Achievable rate vs SNR in the NLoS scenario.}
\label{Fig-SNR-NLoS}
\end{figure}
Fig. {\ref{Fig-SNR-NLoS}} compares the performance of different schemes for different SNRs in the NLoS scenario. We observe that the performance of the proposed MCCM based scheme is always better than the `Central', `Random-index' and `Side-index' schemes, which confirms that fully utilizing the correlations between both the paths and the subcarriers is able to effectively mitigate the influence of beam squint.

\begin{figure}[t]
\centering
\includegraphics[scale=0.6]{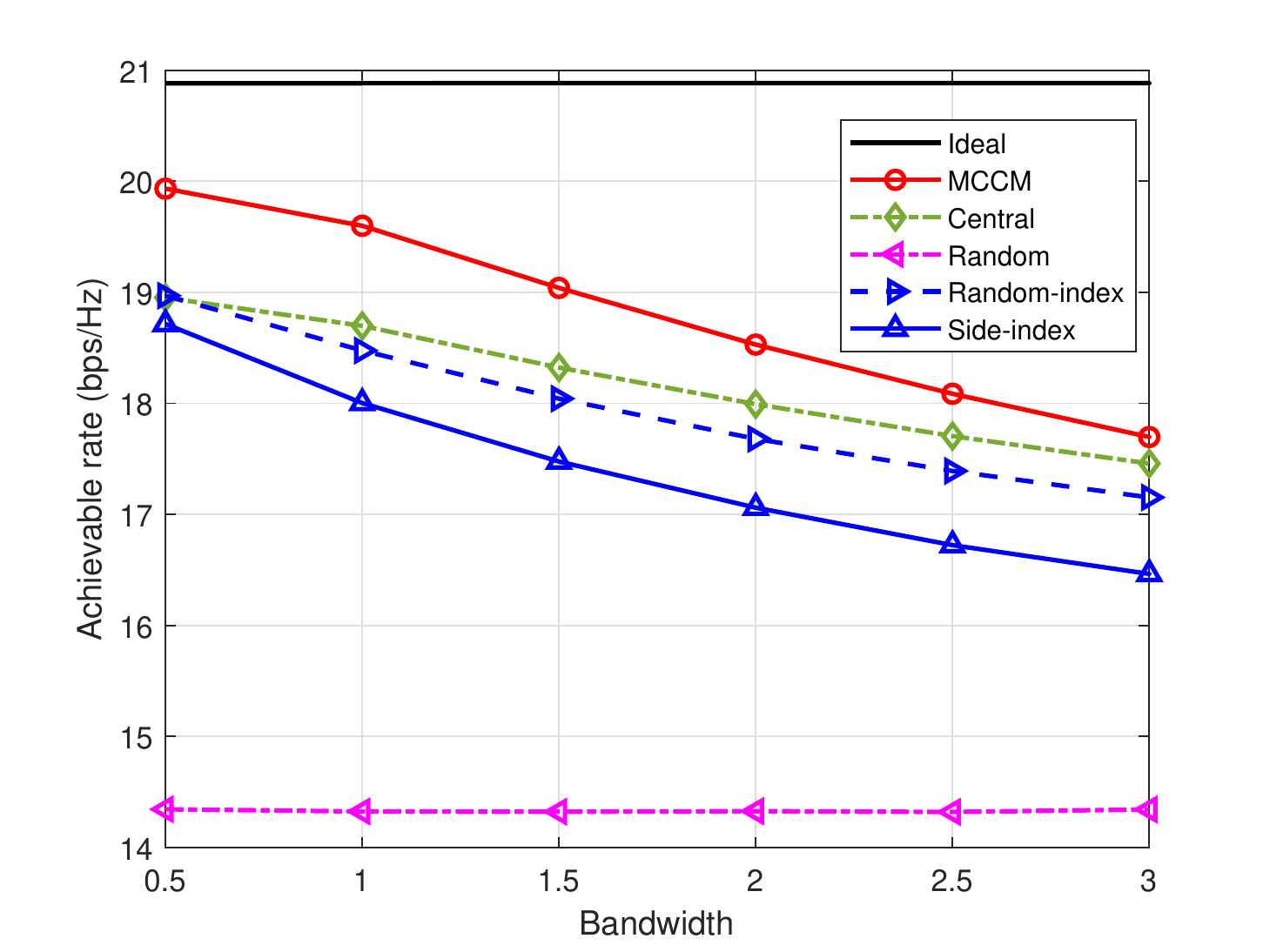}
\caption{Achievable rate vs Bandwidth in the NLoS scenario.}
\label{Fig-Bandwidth-NLoS}
\end{figure}
Finally, Fig. \ref{Fig-Bandwidth-NLoS} depicts the achievable rate of different schemes for different bandwidths in the NLoS scenario. It can be observed that even with 500 MHz bandwidth, there is still a performance loss close to 1 bps/Hz for the proposed scheme compared with the upper bound, while the other schemes loss even more than 2 bps/Hz in achievable rate. Therefore, the effect of beam squint is more intense in the NLoS scenario.

\section{Conclusions}
In this paper, several novel phase shift design schemes were proposed for mitigating the effect of beam squint in RIS assisted wideband mmWave communication systems. Specifically, for the LoS scenario between the RIS and the user, we firstly derived the optimal phase shift for each subcarrier. Then, to construct the common phase shift matrix for all subcarriers, we proposed a near-optimal scheme by maximizing the upper bound of the achievable rate, which is only based on the long-term angle information. Moreover, for the NLoS scenario, a MCCM based scheme was proposed by fully exploiting the correlations between both the paths and the subcarriers. Our extensive numerical experiments demonstrated that the beam squint would cause more than 3 bps/Hz performance loss when applying high bandwidth or deploying large number of elements in the RIS, as well as validated the effectiveness of the proposed schemes in mitigating the effect of beam squint.

%\section*{Acknowledgment}
%This work was supported in part by the National Science Foundation of China with Grant numbers 61771216 and 61729101, National Key R$\&$D  Program of China under Grant 2019YFB180003400, Fundamental Research Funds for the Central Universities with Grant number 2015ZDTD012, and China Scholarship Council (CSC).

\end{document}